\renewcommand{\L}{\Delta}
\newcommand{\Z}{\mathcal{B}}
\newcommand{\X}{\mathcal{X}}
\newtheorem{nclaim}{Claim}
\newcommand{\dupsto}{\Rightarrow}
\newcommand{\contractsto}{\rightarrowtail}
\newcommand{\dupstok}[1]{\dupsto_{#1}}
\newcommand{\contractstok}[1]{\contractsto_{#1}}
\title{The Tandem Duplication Distance is NP-hard}
\author{Manuel Lafond\inst{1} \and Binhai Zhu\inst{2} \and Peng Zou\inst{2}}
\institute{Department of Computer Science, Universite de Sherbrooke, Sherbrooke, Quebec J1K 2R1, Canada, \texttt{manuel.lafond@usherbrooke.ca} \and
Gianforte School of Computing, Montana State University, Bozeman, MT 59717, USA, \texttt{bhz@montana.edu, peng.zou@student.montana.edu}}
\authorrunning{Lafond, Zhu and Zou.}
\begin{document}
\maketitle  
\begin{abstract}
In computational biology, tandem duplication is an important biological phenomenon which can occur either at the genome or at the DNA level.  A tandem duplication takes a copy of a genome segment and inserts it right after the segment  --- this can be represented as the string operation $AXB \dupsto AXXB$.  
Tandem exon duplications have been found in many species such as human, fly or worm, and have been largely studied in computational biology. 

The \emph{Tandem Duplication} (TD) distance problem we investigate in this paper is defined as follows: given two strings $S$ and $T$ over the same alphabet, compute the smallest sequence of tandem duplications required to convert $S$ to $T$. The natural question of whether the TD distance can be computed in polynomial time was posed in 2004 by Leupold et al. and had remained open, despite the fact that tandem duplications have received much attention ever since. In this paper, we prove that this problem is NP-hard.  We further show that this hardness holds even if all characters of $S$ are distinct.  This is known as the \emph{exemplar} TD distance, which is of special relevance in bioinformatics.  One of the tools we develop for the reduction is a new problem called the \emph{Cost-Effective Subgraph}, for which we obtain W[1]-hardness results that might be of independent interest.
We finally show that computing the exemplar TD distance between $S$ and $T$ is fixed-parameter tractable.  
Our results open the door to many other questions, and we conclude with several open problems.
\end{abstract}

\section{Introduction}
Tandem duplication is a biological process that creates consecutive copies of a segment of a genome during DNA replication.  Representing genomes as strings, this event transforms a string $AXB$ into another string $AXXB$.  This process is known to occur either at small scale at the nucleotide level, or at large scale at the genome level \cite{Charlesworth94,Rao06,Chen09,Letunic02,Sharp05}. For instance, it is known that the Huntington disease is associated with the duplication of 3 nucleotides {\tt CAG} \cite{Macdonald93}, whereas at genome level, tandem duplications are known to involve multiple genes during cancer progression \cite{Oesper12}.  Furthermore, gene duplication is believed to be the main driving force behind evolution, and the majority of duplications affecting organisms are believed to be of the tandem type (see e.g.~\cite{uneqcrnature}).

For these reasons, tandem duplications have received significant attention in the last decades, both in practice and theory. 
The combinatorial aspects of tandem duplications have been studied extensively by computational biologists~\cite{landau01,gascuel2003,Gusfield04,tremblay2011} and, in parallel, by various formal language communities~\cite{Dassow99, Wang00, Leupold04}.  From the latter perspective, a natural question arises: given a string $S$, what is the language that can be obtained starting from $S$ and applying (any number of) tandem duplications, i.e. rules of the form $AXB \rightarrow AXXB$?  This question was first asked in 1984 in the context of so-called \emph{copying systems}~\cite{Ehren84}.  Combined with results from~\cite{Bovet92}, it was shown that this language is regular if $S$ is on a binary alphabet, but not regular for larger alphabets.  These results were rediscovered 15 years later in~\cite{Dassow99,Wang00}.
In \cite{Leupold04}, it was shown that
given an unbounded duplication language (`unbounded' means that the size $|X|$ of the duplicated string $X$ is not necessarily 
bounded by any constant), the membership, inclusion and regularity testing problems can all be decided in
linear time; same with the equivalence testing between two such languages. In \cite{Leupold04,Leupold05,Ito06},
similar problems are also considered when the duplication size $|X|$ is bounded.
More recently in~\cite{Hassan16,Jain17}, the authors study the \emph{expressive power} of tandem duplications, a notion based on the subsequences that can be obtained from a copy operation.

In this work, we are interested in a question posed in \cite{Leupold04} (pp. 306, Open Problem 3) by Leupold et al., who raised the problem of computing the minimum number of tandem duplications to transform a string $S$ to another string $T$.  We call this the \emph{Tandem Duplication (TD) distance} problem.  
The TD distance is one of the many ways of comparing two genomes represented as strings in computational biology --- other notable examples include 
breakpoint~\cite{hannen96} and transpositions distances, the latter having recently been shown NP-hard in a celebrated paper of Bulteau et al.~\cite{bulteau12}. The TD distance has itself received 
special attention recently, owing to its role in cancer evolution~\cite{Letu18}.


\vspace{3mm}

\noindent
\textbf{Our results.} In this paper, we solve the problem posed by Leupold et al. in 2004 and show that computing the TD distance from a string $S$ to a string $T$ is NP-hard.  
We show that this result holds even if $S$ is \emph{exemplar}, i.e. if each character of $S$ is distinct.  Exemplar strings are commonly studied in computational biology~\cite{sankoff01}, since they represent genomes that existed prior to duplication events.  We note that simply deciding if $S$ can  be transformed into $T$ by a sequence of TDs still has unknown complexity.  In our case, we show that the hardness of minimizing TDs holds on instances in which such a sequence is guaranteed to exist.

As demonstrated by the transpositions distance in~\cite{bulteau12}, obtaining NP-hardness results for string distances can sometimes be an involving task.
Our hardness reduction is also quite technical, and one of the tools we develop for it is a new problem we call the \emph{Cost-Effective Subgraph}.  
In this problem, we are given a graph $G$ with a cost $c$, and we must choose a subset $X$ of $V(G)$.  Each edge with both endpoints in $X$ has a cost of $|X|$, every other edge costs $c$, and the goal is to find a subset $X$ of minimum cost. We show that this problem is W[1]-hard for parameter $p + c$, where $p$ is the cost that we can save below the upper bound $c|E(G)|$\footnote{In other words, if we were to state the maximization version of the Cost-Effective Subgraph problem, $p$ would be the value to maximize.  The minimization version, however, is more convenient to use for our needs.}. 
The problem enforces optimizing the tradeoff between covering many edges versus having a large subset of high cost, which might be applicable to other problems.  In our case it captures the main difficulty in computing TD distances.
We then obtain some positive results by showing that if $S$ is exemplar, then one can decide if $S$ can be transformed into $T$ using at most $k$ duplications in time 
$2^{O(k^2)} + poly(n)$.  The result is obtained through an exponential size kernel.
Finally, we conclude with several open problems that might be of interest to the theoretical computer science community.


This paper is organized as follows. In Section 2, we give basic definitions. In Section 3, we show that computing the
TD distance is NP-hard through the Cost-Effective Subgraph problem. In Section 4, we show that computing the exemplar TD distance is FPT.
In Section 5, we conclude the paper with several open problems.

\section{Preliminary notions}



We borrow the string terminology and notation from~\cite{gusfieldbook}.
Unless stated otherwise, all the strings in the paper are on an alphabet denoted $\Sigma$.  
For a string $S$, we write $\Sigma(S)$ for 
the subset of characters of $\Sigma$ that have at least one occurrence in $S$.  A string $S$ is called 
\emph{exemplar} if $|S| = |\Sigma(S)|$, i.e. each character present in $S$ occurs only once.
A \emph{substring} of $S$ is a contiguous sequence of characters within $S$.  A \emph{prefix} (resp. \emph{suffix}) is a substring that occurs at the beginning (resp. end) of $S$.  A \emph{subsequence} of $S$ is a string that can be obtained by successively deleting characters from $S$.

A \emph{tandem duplication} (TD) is an operation on a string $S$ that copies a substring $X$ of $S$ and inserts the copy after the occurrence of $X$ in $S$.  In other words, a TD transforms $S = AXB$ into $AXXB$.  Given another string $T$, we write $S \dupsto T$ if there exist strings $A, B, X$ such that $S = AXB$ and $T = AXXB$.
More generally, we write $S \dupstok{k} T$ if there exist $S_1, \ldots, S_{k - 1}$ such that $S \dupsto S_1 \dupsto \ldots \dupsto S_{k-1} \dupsto T$.  We also write $S \dupstok{*} T$ if there exists some $k$ such that $S \dupstok{k} T$.  

\begin{definition}
The TD distance $dist_{TD}(S, T)$ between two strings $S$ and $T$ is the minimum value of $k$ 
satisfying $S \dupstok{k} T$.  If $S \dupstok{*} T$ does not hold, then $dist_{TD}(S, T) = \infty$.
\end{definition}

A \emph{square string} is a string of the form $XX$, i.e. a concatenation of two identical substrings.
Given a string $S$, a \emph{contraction} is the reverse of a tandem duplication.  That is, it takes a square string $XX$ contained in $S$ and deletes one of the two copies of $X$.  
We write $T \contractsto S$ if there exist strings $A, B, X$ such that $T = AXXB$ and $S = AXB$.
We also define $T \contractstok{k} S$ and $T \contractstok{*} S$ for contractions analogously as for TDs (note that $T \contractstok{k} S$ if and only if $S \dupstok{k} T$ and $T \contractstok{*} S$ if and only if $S \dupstok{*} T$).  When there is no possible confusion, we will sometimes write $T \contractsto S$ instead $T \contractstok{*} S$.

We have the following problem.

\vspace{4mm}

\noindent
The \textsf{$k$-Tandem Duplication} ($k$-TD) problem: \\
\noindent
\textbf{Input}: two strings $S$ and $T$ over the same alphabet $\Sigma$ and an integer $k$. \\
\noindent
\textbf{Question}: is $dist_{TD}(S, T) \leq k$?

\vspace{4mm}

In the \textsf{Exemplar-$k$-TD} variant of this problem, 
$S$ is required to be exemplar.  
In either variant, we may call $S$ the \emph{source string} and $T$ the \emph{target string}. 
We will often use the fact that $S$ and $T$ form a YES instance if and only if $T$ can be transformed into $S$ 
by a sequence of at most $k$ contractions.  See Fig.\ref{f1} for a simple example.

\begin{figure}[htbp]
\begin{eqnarray*}
Sequence && Operations
\\
Sequence\quad T=\langle a,c,\underline{g,g},a,c,g \rangle &~~~~~& contraction~on~ \langle g,g \rangle
\\
\langle \underline{a,c,g,a,c,g}\rangle && contraction~on~ \langle a,c,g,a,c,g \rangle
\\
Sequence\quad S=\langle a,c,g \rangle && ~~
\end{eqnarray*}
\caption{An example for transforming sequence $T$ to $S$ by two contractions.  The corresponding sequence of TDs from $S$ to $T$ would duplicate $a,c,g$, and then duplicate the first $g$.}
\label{f1}
\end{figure}

We recall that although we study the minimization problem here, it is unknown whether the question $S \dupstok{*} T$ can be decided in polynomial time.
Nonetheless, our NP-hardness reduction applies to `promise' instances in which $S \dupstok{*} T$ always holds.

\section{NP-hardness of Exemplar-$k$-TD}

To facilitate the presentation of our hardness proof, we first make an intermediate reduction using the \textsf{Cost-Effective Subgraph} problem, which we will then reduce to the promise version of the \textsf{Exemplar-$k$-TD} problem.

\subsection*{The \textsf{Cost-Effective Subgraph} problem}

Suppose we are given a graph $G = (V, E)$ and an integer cost $c \in \mathbb{N}^{> 0}$.  For a subset $X \subseteq V$, let $E(X) = \{uv \in E : u, v \in X\}$ denote the edges inside of $X$.  The \emph{cost} of $X$ is defined as 

$$
cost(X) = c \cdot (|E(G)| - |E(X)|) + |X| \cdot |E(X)|
$$

The \textsf{Cost-Effective Subgraph} problem asks for a subset $X$ of minimum cost.  In the decision version of the problem, we are given an integer $k$ and we want to know if there is a subset $X$ whose cost is at most $k$.  Observe that $X = \emptyset$ or $X = V$ are possible solutions.

The idea is that each edge ``outside'' of $X$ costs $c$ and each edge ``inside'' costs $|X|$.  Therefore, we pay for each edge not included in $X$, but if $X$ gets too large, we pay more for edges in $X$.  We must therefore find a balance between the size of $X$ and its number of edges.  The connection with $k$-TD can be roughly described as follows: in our reduction, we will have many substrings which need to be deleted through contractions.  We will have to choose an initial set of contractions $X$ and then, each  substring will have two ways to be contracted: one of cost $c$, and the other of cost $X$. 

An obvious solution for a \textsf{Cost-Effective Subgraph} is to take $X = \emptyset$, which is of cost $c|E(G)|$.  Another formulation of the problem could be whether there is a subset $X$ of cost at most $c|E(G)| - p$, where $p$ can be seen as a ``profit'' to maximize.  Treating $p$ as a parameter, 
we show the NP-hardness and W[1]-hardness in parameters $c + p$ of the \textsf{Cost-Effective Subgraph} problem (we do not study the parameter $k$).  Our reduction to $k$-TD does not preserve W[1]-hardness and we only use the NP-hardness in this paper, but the W[1]-hardness might be of independent interest.  

Before proceeding, we briefly argue the relevance of parameter $c$ in the W[1]-hardness.  If $c$ is a fixed constant, then we may assume that any solution $X$ satisfies $|X| \leq c$.  This is because if $|X| > c$, every edge included in $X$ will cost more than $c$ and putting $X = \emptyset$ yields a lower cost.  Thus for fixed $c$, it suffices to brute-force every subset $X$ of size at most $c$ and we get a $n^{O(c)}$ time algorithm.  Our W[1]-hardness shows that it is difficult to remove this exponential dependence between $n$ and $c$.

\begin{theorem}
The \textsf{Cost-Effective Subgraph} problem is NP-hard and W[1]-hard for parameter $c + p$.
\end{theorem}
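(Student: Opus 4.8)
The plan is to reduce from a known W[1]-hard problem in which we must select a small vertex subset optimizing a covering-type objective. The most natural candidate is \textsc{Clique} (or equivalently \textsc{Independent Set}), parameterized by the solution size. Given a \textsc{Clique} instance $(H, t)$, I would build a graph $G$ together with a cost $c$ so that a clique of size $t$ in $H$ corresponds to a set $X$ with $|X|=t$ that covers exactly $\binom{t}{2}$ edges internally, each at cost $t$, while every edge outside is paid for at cost $c$. The key is to choose $c$ as a carefully tuned polynomial in $t$ (and perhaps $|V(H)|$) so that the only way to beat the baseline cost $c\,|E(G)|$ by the target profit $p$ is to pick a set $X$ of size exactly $t$ whose induced subgraph is as dense as possible, i.e.\ a clique. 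Since we want W[1]-hardness in $c+p$, both $c$ and $p$ must be bounded by a function of $t$ alone — this is the crucial constraint that forces $G$ to have a bounded number of edges, so one cannot simply take $G=H$.

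Concretely, I would take $G$ to be $H$ augmented so that its edge set is small: the trick is that only edges \emph{inside} $X$ contribute the variable cost, so I can afford to keep all of $V(H)$ as vertices but replace the ``outside'' contribution in a controlled way. A cleaner route: let $G$ have $|E(G)|$ equal to some value $m$ depending only on $t$ by \emph{not} using all of $H$'s edges directly, but rather encoding adjacency through gadgets whose total edge count is $f(t)$. For each potential clique-slot I attach a small selector gadget; picking $X$ to be $t$ ``slot'' vertices corresponds to choosing $t$ vertices of $H$, and an internal edge of $G$ between two slot-vertices exists iff the chosen $H$-vertices are adjacent. With $|E(G)| = m = \binom{t}{2} + (\text{gadget overhead depending only on }t)$, setting $c$ slightly above $t$ makes it profitable to internalize edges only when $|X|\le t$, and the profit $p$ is maximized precisely when all $\binom{t}{2}$ adjacency edges are internal — i.e.\ when the chosen $H$-vertices form a clique. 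Then $p$ and $c$ are both functions of $t$ only, giving the parameterized reduction; the same construction with $c$ and $p$ encoded in unary (or simply noting they are polynomially bounded) yields NP-hardness via the classical W[1]-hardness $\Rightarrow$ NP-hardness-under-poly-reductions being unavailable, so I would instead run the reduction from \textsc{Clique} on general (non-parameterized) instances to get NP-hardness directly, or reduce from a genuinely NP-hard variant such as \textsc{Densest-$k$-Subgraph}.

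The key steps, in order: (1) fix the source problem — \textsc{Multicolored Clique} is likely the most convenient, since the color classes let me build one selector gadget per color and guarantee $|X|=t$; (2) design the selector gadget and specify $G$, ensuring $|E(G)|$ depends only on $t$ and that ``$X$ picks one vertex per color'' is enforced by the cost function rather than by hand; (3) compute $cost(X)$ as a function of $|X|$ and $|E(X)|$, and choose $c$ (something like $c = t+1$ or $c=t$) so that the cost is minimized exactly when $|X|=t$ and $E(X)$ is as large as possible; (4) set $p$ to the profit value attained by a clique and verify the ``if and only if'': a clique gives cost $\le c|E(G)|-p$, and conversely any $X$ of cost $\le c|E(G)|-p$ must have $|X|\le t$ (else internal edges overshoot) and $|E(X)|\ge\binom{t}{2}$ (else profit is too small), forcing $X$ to induce a $K_t$; (5) observe $c,p = O(t^2)$, so this is an fpt-reduction for parameter $c+p$, and a polynomial-time reduction overall, giving NP-hardness as well.

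The main obstacle I anticipate is step (2)–(3): balancing the cost function so that it simultaneously (a) penalizes $|X|>t$ via the $|X|\cdot|E(X)|$ term, (b) rewards internalizing the $\binom{t}{2}$ adjacency edges, and (c) does not accidentally reward internalizing gadget edges in a way that lets a non-clique set sneak under the threshold. Getting the arithmetic of $c$ and $p$ exactly right — so that the profit from a genuine $K_t$ strictly exceeds the profit from any $X$ with fewer internal edges or larger size, with both parameters still bounded by a function of $t$ — is where the real work lies. The selector gadgets must also be ``cost-neutral'' enough that an optimal $X$ never wants to include only part of a gadget; I expect to enforce this by making each gadget contribute a fixed block of edges whose internalization is all-or-nothing under the cost trade-off.
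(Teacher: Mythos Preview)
Your proposal contains a genuine misunderstanding of the parameter $p$ that sends you down an unnecessarily complicated (and, as sketched, unworkable) path. You write that ``both $c$ and $p$ must be bounded by a function of $t$ alone --- this is the crucial constraint that forces $G$ to have a bounded number of edges, so one cannot simply take $G=H$.'' This inference is wrong. Recall that $p$ is defined as the \emph{profit below the trivial baseline} $c\,|E(G)|$: we ask whether some $X$ has $cost(X) \le c\,|E(G)| - p$. The quantity $c\,|E(G)|$ may be huge; what must be small is only the gap $p$. Nothing forces $|E(G)|$ to be bounded by the parameter.

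In fact the paper does exactly what you ruled out: it takes $G = H$ unchanged, sets $c = 3k/2$ (assuming $k$ even), and sets $p = \tfrac{k}{2}\binom{k}{2}$. Both $c$ and $p$ are functions of $k$ only, while $|E(G)| = m$ is unconstrained. A $k$-clique $X$ achieves cost exactly $cm - p$, and a direct case analysis on $|X| = k$, $|X| > k$, $|X| < k$ shows that if $G$ has no $k$-clique then every $X$ has cost strictly above $cm - p$. The arithmetic you flag as ``where the real work lies'' is a few lines of calculus on $cost(X) - r$ as a function of $|X|$ and the edge deficit $h$; no gadgets are needed. The same reduction is polynomial-time, so NP-hardness comes for free from the NP-hardness of \textsc{Clique} --- your detour through \textsc{Densest-$k$-Subgraph} is also unnecessary.

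Your alternative plan --- encode $H$'s adjacencies using selector gadgets so that $|E(G)|$ depends only on $t$ --- appears infeasible as stated: if $G$ has only $f(t)$ edges, it has at most $2f(t)$ non-isolated vertices, and there is no room to encode the $\Theta(n^2)$ potential adjacencies of an arbitrary $H$. The sketch never addresses how a graph of bounded edge count could distinguish adjacent from non-adjacent pairs in $H$. Once you drop the false constraint on $|E(G)|$, the whole gadget layer disappears and the reduction becomes the short direct one above.
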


\begin{proof}
We reduce from \textsf{CLIQUE}, a classic NP-hard problem where we are given a graph $G$ and an integer $k$ and must decide whether $G$ contains a clique of size at least $k$.  The problem is also W[1]-hard in parameter $k$~\cite{fellows95}.  We will assume that $k$ is even (which does not alter either hardness results).

Let $(G, k)$ be a \textsf{CLIQUE} instance, letting $n := |V(G)|$ and $m := |E(G)|$.  The graph in our \textsf{Cost-Effective Subgraph} instance is also $G$.  We set the cost $c = 3k/2$, which is an integer since $k$ is even, and put 
\begin{align*}
    r &:= c \left( m - {k \choose 2}\right) + k{k \choose 2} = cm + {k \choose 2}(k - c) = cm - k/2{k \choose 2}
\end{align*}

We ask whether $G$ admits a subgraph $X$ satisfying $cost(X) \leq r$.
We show that $(G, k)$ is a YES instance to \textsf{CLIQUE} if and only if $G$ contains a set $X \subseteq V(G)$ of cost at most $r$.  This will prove both NP-hardness and W[1]-hardness in $c + p$ (noting that here $p = k/2{k \choose 2}$).

The forward direction is easy to see. If $G$ is a YES instance, it has a clique $X$ of size (exactly) $k$.  Since $|E(X)| = {k \choose 2}$, the cost of $X$ is precisely $r$.

Let us consider the converse direction.  Assume that $(G, k)$ is a NO instance of \textsf{CLIQUE}.  Let $X \subseteq V(G)$ be any subset of vertices.  We will show that $cost(X) > r$. There are $3$ cases to consider depending on $|X|$.

\vspace{2mm}
\noindent
\emph{Case 1}: $|X| = k$.  Since $G$ is a NO instance, $X$ is not a clique and thus $|E(X)| = {k \choose 2} - h$, where $h > 0$.  
We have that $cost(X) = c(m - {k \choose 2} + h) + k({k \choose 2} - h) = cm + {k \choose 2}(k - c) + h(c - k) = r + h(c - k)$.
Since $c > k$ and $h > 0$, the cost of $X$ is strictly greater than $r$.

\vspace{2mm}
\noindent
\emph{Case 2}: $|X| = k + l$ for some $l > 0$.  Denote $|E(X)| = {{k+l} \choose 2} - h$, where $h \geq 0$ (actually, $h > 0$ but we do not bother).
The cost of $X$ is 
\begin{align*}
    cost(X) &= c\left(m - {{k+l} \choose 2} + h\right) + (k + l)\left({{k + l} \choose 2} - h\right) 
    \\&= cm + {{k + l} \choose 2}(k + l - c) + h(c - k - l )
    \\&= cm + {{k + l} \choose 2}(l - k/2) + h(k/2 - l)
\end{align*}

Consider the difference  
\begin{align*}
    cost(X) - r &= {{k + l} \choose 2}(l - k/2) - (-k/2){k \choose 2}  + h(k/2-l) 
    \\&=\frac{3 k l^2}{4} - \frac{k l}{4} + \frac{l^3}{2} - \frac{l^2}{2} + h(k/2-l) 
\end{align*}

If $k/2 - l \geq 0$, then the difference is clearly above $0$ regardless of $h$, and then $cost(X) > r$ as desired.  Thus we may assume that $k/2 - l < 0$.  In this case, we may assume that $h = {{k + l} \choose 2}$, as this minimizes $cost(X)$.
But in this case, $cost(X) = cm + {{k+l} \choose 2}(l - k/2) + {{k+l} \choose 2}(k/2 - l) = cm > r$.

\vspace{2mm}
\noindent
\emph{Case 3}: $|X| = k - l$, with $l > 0$.  If $k = l$, then $X = \emptyset$ and $cost(X) = cm > r$.  So we assume $k > l$.
Put $|E(X)| = {{k-l} \choose 2} - h$, where $h \geq 0$.
We have 
\begin{align*}
cost(X) &= c\left( m - {{k-l} \choose 2} + h \right) + (k - l)\left( {{k-l} \choose 2} - h \right) \\
&= cm + {{k-l} \choose 2}(k - l - c) + h(c - k + l) \\
&= cm + {{k - l} \choose 2}(-k/2 - l) + h(k/2 + l)
\end{align*}

The difference with this cost and $r$ is 
\begin{align*}
cost(X) - r &= {{k - l} \choose 2}(-k/2 - l) - (-k/2){k \choose 2} + h(k/2 + l) \\
&= \frac{3 k l^2}{4} + \frac{k l}{4} - \frac{l^3}{2} - \frac{l^2}{2} + h(k/2+l) \\
&> \frac{1}{4}(3l^3 + l^2) - \frac{1}{2}(l^3 + l^2) \geq 0
\end{align*}

the latter since $k > l \geq 1$.  Again, it follows that $cost(X) > r$.
\end{proof}

\subsection*{Reduction to Exemplar-$k$-TD}

Since the reduction is somewhat technical, we provide an overview of the techniques that we will use.  Let $(G, c, r)$ be a \textsf{Cost-Effective Subgraph} instance where $c$ is the cost and $r$ the optimization value, and with vertices $V(G) = \{v_1, \ldots, v_n\}$.  We will construct strings $S$ and $T$ and argue on the number of contractions to go from $T$ to $S$.
We would like our source string to be $S = x_1x_2 \ldots x_n$, where each $x_i$ is a distinct character that corresponds to vertex $v_i$.
Let $S'$ be obtained by doubling every $x_i$, i.e. 
$S' = x_1x_1x_2x_2 \ldots x_nx_n$.
Our goal is to put $T = S'E_1E_2 \ldots E_m$, where each $E_i$ is a substring gadget corresponding to edge $e_i \in E(G)$ that we must remove to go from $T$ to $S$.  
In a contraction sequence from $T$ to $S$, we make it so that we first want to contract some, but not necessarily all, of the doubled $x_i$'s of $S'$, resulting in another string $S''$.  
Let $t$ be the number of $x_i$'s contracted from $S'$ to $S''$.
For instance, we could have $S'' = x_1x_1x_2x_3x_3x_4x_5x_5$, where only $x_2$ and $x_4$ were contracted, and thus $t = 2$.  The idea is that these contracted $x_i$'s correspond to the vertices of a cost-effective subgraph.
After $T$ is transformed to $S''E_1 \ldots E_m$, we then force each $E_i$ to use $S''$ to contract it.
For $m = 3$, a contraction sequence that we would like to enforce would take the form
$$
\underline{S'}E_1E_2E_3 \contractsto \underline{S''E_1}E_2E_3 \contractsto \underline{S''E_2}E_3 \contractsto \underline{S''E_3} \contractsto \underline{S''} \contractsto S
$$

where we underline the substring affected by contractions at each step.
We make it so that when contracting $S''E_iE_{i+1} \ldots E_m$ into $S''E_{i+1} \ldots E_m$, we have two options.
Suppose that $v_j, v_k$ are the endpoints of edge $e_i$.  If, in $S''$, we had chosen to contract $x_j$ and $x_k$, we can contract $E_i$ using a sequence of $t$ moves.  
Otherwise, we must contract $E_i$ using another more costly sequence of $c$ moves.  The total cost to eliminate the $E_i$ gadgets will be $c(m - e) + te$, where $e$ is the number of edges that can be contracted using the first choice, i.e. for which both endpoints were chosen in $S''$.   

Unfortunately, constructing $S'$ and the $E_i$'s to implement the above idea is not straightforward.  The main difficulty lies in forcing an optimal solution to behave as we describe -- i.e. enforcing going from $S'$ to $S''$ first, enforcing the $E_i$'s to use $S''$, and enforcing the two options to contract $E_i$ with the desired costs.  In particular, we must replace the $x_i$'s by carefully constructed substrings $X_i$.  We must also repeat the sequence of $E_i$'s a certain number $p$ times.  We now proceed with the technical details.

\begin{theorem}\label{thm:main-np-hardness}
The \textsf{Exemplar-$k$-TD} problem is NP-complete.
\end{theorem}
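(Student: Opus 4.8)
The plan is to prove membership in NP and then NP-hardness via a reduction from the \textsf{Cost-Effective Subgraph} problem (NP-hard by Theorem~1). Membership in NP is straightforward: whenever $dist_{TD}(S,T)\le k$, a shortest witnessing sequence of tandem duplications has length at most $|T|-|S|$ (each duplication adds at least one character), hence can be written down in polynomial size; each step is verified in polynomial time by checking that it has the form $AXB\dupsto AXXB$ for the substring named by the certificate. Exemplarity of $S$ is part of the input, so nothing extra is required.

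Following the overview preceding the statement, given a \textsf{Cost-Effective Subgraph} instance $(G,c,r)$ with $V(G)=\{v_1,\dots,v_n\}$ and $E(G)=\{e_1,\dots,e_m\}$, I would build $S$ and $T$ as follows. Each vertex $v_i$ is replaced by a \emph{vertex gadget} $X_i$, a string of fresh, pairwise distinct characters equipped with designated boundary markers, and $S$ is (essentially) the concatenation $X_1X_2\cdots X_n$, framed by fixed delimiters so that no accidental square occurs at a seam. Let $S'$ be a controlled ``doubling'' of $S$ in which each $X_i$ can be independently contracted back. For each edge $e_i=v_jv_k$ one builds an \emph{edge gadget} $E_i$ out of suitably marked copies of $X_j$ and $X_k$ together with separators, engineered so that $E_i$ can be eliminated by a contraction sequence of length exactly $t$ when both $X_j$ and $X_k$ already appear in ``contracted form'' immediately to its left, and otherwise requires length exactly $c$. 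Finally set $T=S'(E_1E_2\cdots E_m)^p$ for a sufficiently large polynomial $p=p(n,m,c)$, and let the threshold $k$ be the exact cost of the intended optimal sequence, namely $k=p\cdot r$ plus a bookkeeping term that does not depend on the chosen subgraph (roughly $n$). All gadget lengths, the repetition count $p$, and $k$ are polynomial in the input, so the reduction is polynomial; moreover the construction makes $S\dupstok{*}T$ always hold, so the reduction applies to the promise version of \textsf{Exemplar-$k$-TD}.

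The forward direction amounts to exhibiting the intended contraction sequence. Given a subgraph $X\subseteq V(G)$ with $cost(X)\le r$, first contract the gadgets $X_i$ with $v_i\in X$, turning $S'$ into a string $S''$ (using $|X|=t$ contractions); then, for each of the $p$ copies of the edge block, remove every $E_i$ ``through'' $S''$ at cost $t$ if $e_i\in E(X)$ and at cost $c$ otherwise, so that each block costs $c(|E(G)|-|E(X)|)+t\,|E(X)|=cost(X)$; finally contract $S''$ down to $S$. Summing, the number of contractions is exactly the bookkeeping term plus $p\cdot cost(X)\le k$.

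The main obstacle is the converse: I must show that \emph{any} contraction sequence from $T$ to $S$ of length at most $k$ forces the intended behaviour, and hence yields a subgraph of cost at most $r$. This requires a chain of structural lemmas, proved with invariants on character multiplicities and on which square substrings can ever occur: (i) the prefix inherited from $S'$ must be reduced to a string $S''$ obtained by contracting some set of vertex gadgets before the edge block can be used productively; (ii) each $E_i$ must be eliminated through $S''$, with no cheaper ad hoc route available, which gives the cost dichotomy ($t$ versus $c$) according to whether both of its endpoint gadgets were contracted; (iii) because $p$ is large, the solution cannot afford to use different strings $S''$ for different copies of the edge block, so a single subgraph $X$ can be read off; and (iv) adding the per-block lower bounds over all $p$ copies plus the unavoidable bookkeeping term forces $p\cdot cost(X)\le p\cdot r$, i.e.\ $cost(X)\le r$. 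Ruling out the ad hoc shortcuts in step (ii) is the delicate part, since the operation is very permissive — one may contract any square substring anywhere — and this is exactly why the $X_i$ must be elaborate multi-character gadgets rather than single doubled letters, and why the edge block must be amplified $p$ times.
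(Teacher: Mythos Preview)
Your proposal follows the paper's reduction strategy (and indeed restates the overview preceding the theorem), so the approach is correct in outline. However, what you have is a plan rather than a proof: the gadget construction and the converse lemmas are exactly where the difficulty lies, and they are not supplied.

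Two places where your plan, as stated, would not go through once details are filled in. First, for the converse you need each edge gadget $E_i$ to carry a character that no other $E_j$ has; this is what prevents a single contraction from spanning two gadgets at once and paying far less than $c$ or $t$ per edge. The paper achieves this via the $\Z_i^{01}$ ``blocker'' prefixes, where $\Z_i$ contains the fresh characters $B_i,B_{i-1},\dots,B_0$, so that $E_{i+1}$ contains $B_{i+1}$ while $E_i$ does not. Nothing in your outline supplies such a mechanism, and without it your lower bound in step~(ii) collapses. Second, your claim~(iii) --- that an optimal solution ``cannot afford to use different strings $S''$ for different copies of the edge block'' --- is too strong and is not what the paper establishes. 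One cannot forbid activating additional $X_i$'s between copies. Instead, the paper partitions the $p$ copies into $p/m$ intervals of $m$ consecutive edges, uses pigeonhole to argue that all but at most $2n$ intervals are \emph{homogeneous} (same activation set throughout), and reads the subgraph $W$ off the cheapest homogeneous interval; $p$ is chosen large enough that the non-homogeneous intervals are negligible in the final count. Your summation in~(iv) only works after this refinement.

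A smaller point: the dichotomy is not literally ``$t$ versus $c$''. In the paper the two removal modes cost $dn+d|W|$ and $dn+dc$ respectively, and the multiplier $d$ (together with the large $p$) is what lets the cost-effective term dominate the various additive error terms in the final accounting.
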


\begin{proof}
To see that the problem is in NP, note that $dist_{TD}(S, T) \leq |T|$ since each contraction from $T$ to $S$ removes a character.  Thus a sequence of contractions can serve as a certificate, has polynomial size and is easy to verify.

For hardness, we reduce from the \textsf{Cost-Effective Subgraph} problem.  
Let $(G, c, r)$ be an instance of \textsf{Cost-Effective Subgraph}, letting $n := |V(G)|$ and $m := |E(G)|$.  Here $c$ is the ``outsider edge'' cost and we ask whether there is a subset $X \subseteq V(G)$ such that $c(m - |E(X)|) + |X||E(X)| \leq r$.  
We denote $V(G) = \{v_1, \ldots, v_n\}$ and $E(G) = \{e_1, \ldots, e_m\}$.  The ordering of vertices and edges is arbitrary but remains fixed for the remainder of the proof.
For convenience, we allow the edge indices to loop through $1$ to $m$, and so we put $e_i = e_{i + lm}$ for any integer $l \geq 0$.  Thus we may sometimes refer to an edge $e_k$ with an index $k > m$, meaning that $e_k$ is actually the edge $e_{((k - 1) \mod m) + 1}$.

\vspace{2mm}
\noindent
\textbf{The construction.}
Let us first make an observation.  If we take an exemplar string $X = x_1 \ldots x_l$ (i.e. a string in which no character occurs twice), we can double its characters and obtain a string $X' = x_1x_1 \ldots x_lx_l$.  The length of $X'$ is only twice that of $X$ and $dist_{TD}(X, X') = l$, i.e. going from $X'$ to $X$ requires $l$ contractions.  We will sometimes describe pairs of strings $X$ and $X'$ at distance $l$ without explicitly describing $X$ and $X'$, but the reader can assume that $X$ starts as an exemplar string and we obtain $X'$ by doubling it.

Now we show how to construct $S$ and $T$.  First let $d,p \in \mathbb{N}$ be large (but polynomial) integers.  We choose $p$ to be a multiple of $m$.  For concreteness, we put $d = m + 1$ and $p = m(n + m)^{10}$, but it is enough to think of these values as simply ``large enough''.  Instead of doubling $x_i$'s as in the intuition paragraph above, we will duplicate some characters $d$ times.  Moreover, we can't create a $T$ string that behaves exactly as described above, but we will show that we can append $p$ copies of carefully crafted substring to obtain the desired result.  We need $d$ and $p$ to be high enough so that ``enough'' copies behave as we desire.  

For each $i \in [n]$, define an exemplar string $X_i$ of length $d$.  Moreover, create enough characters so that no two $X_i$ string contain a character in common.  Let $X_i^d$ be a string satisfying $dist_{TD}(X_i, X_i^d) = d$.

Then for each $j \in \{0, 1, \ldots, 2p\}$, define an exemplar string $B_j$.  Ensure that no $B_j$ contains a character from an $X_i$ string, and no two $B_j$'s contain a common character.
The $B_j$ strings can consist of a single character, with the exception of $B_0$ and $B_1$ which are special.  
We assume that for $B_0$ and $B_1$, we have strings $B_0^{*}$ and $B_1^*$ such that
\begin{align*}
dist_{TD}(B_0, B_0^{*}) &= dc + 2d - 2 \\
dist_{TD}(B_1, B_1^{*}) &= dn + 2d - 1
\end{align*}
The $B_j$'s are the building blocks of larger strings.  For each $q \in [2p]$, define 
\begin{align*}
\Z_q &= B_qB_{q-1} \ldots B_2B_1B_0    
&\Z_q^0 &= B_qB_{q-1} \ldots B_2B_1B_0^{*} \\
\Z_q^1 &= B_qB_{q-1} \ldots B_2B_1^{*} B_0
& \Z_q^{01} &= B_qB_{q-1} \ldots B_2B_1^{*} B_0^{*}
\end{align*}
These strings are used as ``blockers'' and prevent certain contractions from happening.
Also define the strings
\begin{align*}
\X = X_1X_2 \ldots X_n && 
\X^d = X_1^dX_2^d \ldots X_n^d
\end{align*}
and for edge $e_q = v_iv_j$ with $q \in [p]$ whose endpoints are $v_i$ and $v_j$, define
\begin{align*}
\X_{e_q} = 
X_1^d \ldots X_{i-1}^d X_i X_{i+1}^d \ldots 
X_{j-1}^d X_j X_{j+1}^d \ldots 
X_n^d
\end{align*}
Thus in $\X_{e_q}$, all $X_k$ substrings are turned into $X_k^d$, except $X_i$ and $X_j$.

Finally, define a new additional character $\L$, which will be used to separate some of the components of our string.  We can now define $S$ and $T$.  We have
\begin{align*}
S &= \Z_{2p} \X \L = B_{2p}B_{2p-1} \ldots B_2B_1B_0 X_1 X_2 \ldots X_n \L 
\end{align*}

It follows from the definitions of $\Z_{2p}, \X$ and $\Delta$ that $S$ is exemplar.
Now for $i \in [p]$, define 

$$E_i :=  \Z_i^{01}\X_{e_i} \L \Z_{2p}^1\X \L$$ 

which we will call the \emph{edge gadget}.
Define $T$ as 
\begin{align*}
T &= \Z_{2p}^0 \X^d \L \Z_{2p}^1 \X \L E_1 E_2 \ldots E_p \\ &=\Z_{2p}^0 \X^d \L \Z_{2p}^1 \X \L 
\left[ \Z_1^{01}\X_{e_1} \L \Z_{2p}^1 \X \L \right]
\left[ \Z_2^{01}\X_{e_2} \L \Z_{2p}^1 \X \L \right]
\ldots 
\left[ \Z_p^{01}\X_{e_p} \L \Z_{2p}^1 \X \L \right]
\end{align*}
(we add brackets for clarity --- they are not actual characters of $T$).
The idea is that $T$ starts with $S' = \Z_{2p}^0 \X^d \L$, a modified $S$ in which   $\Z_{2p}$ becomes $\Z_{2p}^0$ and the $X_i$ substrings are turned into $X_i^d$.  This $\X^d$ substring serves as a choice of vertices in our cost-effective subgraph.  Each edge $e_i$ has a ``gadget substring''
$E_i = \Z_i^{01}\X_{e_i} \L \Z_{2p}^1\X \L$.  Since $p$ is a multiple of $m$, the sequence of edge gadgets $E_1E_2 \ldots E_m$ is repeated $p/m$ times. 
Our goal to go from $T$ to $S$ is to get rid of all these edge gadgets by contractions.
Note that because a  $E_i$ gadget starts with $\Z_i^{01}$ and the gadget $E_{i+1}$ starts with $\Z_{i+1}^{01}$, the substring $E_{i+1}$ has a character that the substring  $E_i$ does not have.

\vspace{2mm}
\noindent
\textbf{The hardness proof.}
We now show that $G$ has a subgraph $W$ of cost at most $r$ if and only if $T$ can be contracted to $S$ using at most 
$p/m \cdot d (r + nm) + 4cdn$ moves.  We include the forward direction, which is the most instructive, in the main text.  The other direction can be found in the Appendix.
Although we shall not dig into details here, it can be deduced from the $(\Rightarrow$) direction that $T \contractstok{*} S$ holds.

($\Rightarrow$) Suppose that $G$ has a subgraph $W$ of cost at most $r$.  Thus $c(m - |E(W)|) + |W||E(W)| \leq r$.  To go from $T$ to $S$, first consider an edge $e_i$ that does not have both endpoints in $W$.  We show how to get rid of the gadget substring $E_i$ for $e_i$ using $dn + dc$ contractions.   
Note that $T$ contains the substring 
$\Z_{2p}^1\X \L E_i = \Z_{2p}^1\X \L [\Z_i^{01}\X_{e_i} \L \Z_{2p}^1\X \L]$, where brackets surround the $E_i$ occurrence that we want to remove.  
We can first contract $\Z_i^{01}$ to $\Z_i^{1}$ using $dc + 2d - 2$ contractions, then contract $\X_{e_i}$ to $\X$ using $d(n - 2)$
contractions.  The result is the $\Z_{2p}^1\X \L [\Z_i^1 \X \L \Z_{2p}^1\X \L]$ substring, which becomes $\Z_{2p}^1 \X \L$ using two contractions (see below).  This sums to $dc + 2d - 2 + dn - 2d + 2 = dc + dn$ moves.
More visually, the sequence of contractions works as follows (as usual brackets indicate the $E_i$ substring and what remains of it)
\begin{align*}
    &  \Z_{2p}^1\X \L \left[ \Z_i^{01}\X_{e_i} \L \Z_{2p}^1\X \L \right]  & \\
    \contractsto &\Z_{2p}^1\X \L \left[\Z_i^{1}\X_{e_i} \L\Z_{2p}^1\X \L \right]  & (dc + 2d - 2 \mbox{ contractions} ) \\
    \contractsto  &\Z_{2p}^1\X \L \left[ \Z_i^{1}\X \L\Z_{2p}^1\X \L\right]    & (d(n - 2) \mbox{ contractions} ) \\
    =&B_{2p}B_{{2p}-1} \ldots B_{i+1} 
     \Z_i^1 \X \L  \left[ \Z_i^1 \X \L  \Z_{2p}^1 \X \L \right] \\
    \contractsto & B_{2p}B_{{2p}-1} \ldots B_{i+1} 
    \Z_i^1 \X \L   \left[ \Z_{2p}^1 \X \L \right]& (1 \mbox{ contraction} )\\
    = & \Z_{2p}^1 \X \L \left[\Z_{2p}^1 \X \L \right] \\
    \contractsto & \Z_{2p}^1 \X \L & (1 \mbox{ contraction} )
\end{align*}
This sequence of $dn + dc$ contractions effectively removes the $E_i$ substring gadget.
Observe that after applying this sequence, it is still true that every remaining $E_j$ gadget substring is preceded by $\Z_{2p}^1\X\L$.
We may therefore repeatedly apply this contraction sequence to every $e_i$ not contained in $W$ (including those $e_i$ gadgets for which $i > m$).  This procedure is thus applied to $p/m \cdot (m - |E(W)|)$ gadgets.
We assume that we have done so, and that every $e_i$ for which the $E_i$ gadget substring remains is in $W$.  Call the resulting string $T'$.

Now, let $\X_W$ be the substring obtained from $\X^d$
by contracting, for each $v_i \in W$, the string $X_i^d$
to $X_i$.  We assume that we have contracted the $\X^d$ substring of $T'$ to $\X_W$, which uses $d|W|$ contractions (note that there is only one occurrence of $\X^d$ in $T'$, namely right before the first $\L$).  Call $T''$ the resulting string.  At this point, for every $E_i$ substring gadget that remains, where $E_i$ corresponds to edge $e_i = v_jv_k$, $\X_W$ contains the substrings $X_j$ and $X_k$ (instead of $X_j^d$ and $X_k^d$).

Let $i$ be the smallest integer for which the $e_i$ substring gadget $E_i$ is still in $T'$.
This is the leftmost edge gadget still in $T''$, meaning that $T''$ has the prefix
\begin{align*}
    \Z_{2p}^0\X_W \L \Z_{2p}^1 \X \L \left[ \Z_i^{01} \X_{e_i} \L \Z_{2p}^1 \X \L \right]
\end{align*}
where brackets indicate the $E_i$ substring.
To remove $E_i$, first contract $\Z_i^{01}$ to $\Z_i^{0}$, and contract $\X_{e_i}$ to $X_W$ (this is possible since $e_i \subseteq W$).  
The result is 
$\Z_{2p}^0 \X_W \L \Z_{2p}^1 \X \L \left[ \Z_i^0 \X_W \L \Z_{2p}^1 \X \L \right]$.  One more contraction gets rid of the second half.  This requires $dn + 2d - 1 + d(|W| - 2) + 1 = dn + d|W|$ contractions.
This procedure is applied to $p/m \cdot |E(W)|$ gadgets.  To recap, the contraction sequence for $E_i$ does as follows:
\begin{align*}
    &  \Z_{2p}^0\X_W \L \Z_{2p}^1\X\L \left[ \Z_i^{01}\X_{e_i} \L \Z_{2p}^1\X \L \right]  & \\
    \contractsto &\Z_{2p}^0\X_W \L \Z_{2p}^1\X\L \left[\Z_i^{0}\X_{e_i} \L\Z_{2p}^1\X \L \right] & (dn + 2d - 1 \mbox{ contractions} ) \\
    \contractsto  &\Z_{2p}^0\X_W \L \Z_{2p}^1\X\L \left[ \Z_i^{0}\X_W \L\Z_{2p}^1\X \L\right]    & (d(|W| - 2) \mbox{ contractions} ) \\
    \contractsto & \Z_{2p}^0 \X_W \L \Z_{2p}^1 \X \L & (1 \mbox{ contraction} )
\end{align*}
After we repeat this for every $E_i$, all that remains is the string $\Z_{2p}^0 \X_W \L \Z_{2p}^1 \X \L$.  We contract $\X_W$ to $\X$ using $d(n-|W|)$ contractions (in total, going from $\X^d$ to $\X$ required $dn$ moves).  Then contract $\Z_{2p}^0$ and $\Z_{2p}^1$ to $\Z_{2p}$ using $dc + 2d - 2 + dn + 2d - 1 = d(c + n + 4) - 3$ contractions.  One more contraction of the second half of the string yields $S$.
The summary of the number of contractions made is  
\begin{align*}
& \frac{p}{m} \cdot (m - |E(W)|) \cdot (dc + dn) + \frac{p}{m} \cdot |E(W)| \cdot (dn + d|W|) + dn + d(c + n + 4) - 3\\
\leq & \frac{p}{m} \cdot (m - |E(W)|) \cdot (dc + dn) + \frac{p}{m} \cdot |E(W)| \cdot (dn + d|W|) + 4cdn\\
= & \frac{p}{m} \cdot d \cdot (c + n)(m - |E(W)|) + \frac{p}{m} \cdot d \cdot (n + |W|)|E(W)| + 4cdn \\
=& \frac{p}{m} \cdot d \cdot \left[ c(m - |E(W)|) + |W||E(W)| + nm  \right]+ 4cdn \\
\leq& \frac{p}{m} \cdot d (r + nm)+ 4cdn
\end{align*}
as desired.

($\Leftarrow$): this direction of the proof is somewhat involved and we redirect the interested reader to the Appendix.  The idea is to show that a minimum contraction sequence must have the form similar to that in the ($\Rightarrow$) direction.  The challenging part is to show that each $E_i$ substring must get removed separately in this sequence, and that ``most'' of them incur a cost of either $dn + dt - 2$ or $dn + dc - 2$ for some $t$ (this ``most'' is the reason that we need a large $p$).

\section{An FPT algorithm for the exemplar problem}

In this section, we will show that \textsf{Exemplar-$k$-TD} can be solved in time $2^{O(k^2)} + poly(n)$ by obtaining a kernel of size $O(k2^k)$ (here $n$ is the length of $T$).

We first note that there is a very simple, brute-force algorithm to solve \textsf{$k$-TD}
(including \textsf{Exemplar-$k$-TD} as a particular case).
This only establishes membership in the $XP$ class, but 
it will be useful to evaluate the complexity of our kernelization later on.

\begin{proposition}\label{prop:k-search-tree}
The \textsf{$k$-TD} problem can be solved in time $O(n^{2k})$, where 
$n$ is the size of the target string.
\end{proposition}

\begin{proof}
Let $(S, T)$ be a given instance of \textsf{$k$-TD}.  Consider the branching algorithm that, starting from $T$, tries to contract every substring of the form $XX$ in $T$ and recurses on each resulting substring, decrementing $k$ by $1$ each time (the branching stops when $S$ is obtained or when $k$ reaches $0$ without attaining $S$). 
We obtain a search tree of depth at most $k$ and degree at most $n^2$, and thus it has $O(n^{2k})$ nodes.  Visiting the internal nodes of this search tree only requires enumerating $O(n^2)$ substrings, which form the set of children of the node.  Hence, there is no added computation cost to consider when visiting a node.   
\end{proof}

From now on, we assume that we have an \textsf{Exemplar-$k$-TD} instance $(S, T)$, and so that $S$ is exemplar.

Let $x$ and $y$ be two consecutive characters in $S$ (i.e. $xy$ is a subtring of $S$).
We say that $xy$ is \emph{$(S,T)$-stable} if in $T$, every occurrence of $x$ in $T$ is followed by $y$
and every occurrence of $y$ is preceded by $x$.
An \emph{$(S,T)$-stable substring} $X = x_1 \ldots x_l$, where $l \geq 2$, is a substring 
of $S$ such that $x_ix_{i+1}$ is $(S,T)$-stable for every $i \in [l - 1]$.  We also define a string with a single character $x_i$ to be a $(S, T)$-stable substring (provided $x_i$ appears in $S$ and $T$).
If any substring of $S$ that strictly contains $X$ is not an $(S,T)$-stable substring, then 
$X$ is called a \emph{maximal $(S,T)$-stable substring}.
Note that these definitions are independent of $S$ and $T$, and so the same definitions apply for $(X, Y)$-stability, for any strings $X$ and $Y$.

We will show that every maximal $(S, T)$-stable substring can be replaced by a single character, 
and that if $T$ can be obtained from $S$ using at most $k$ tandem duplications, 
then this leaves strings of bounded size.

We first show that, roughly speaking, stability is maintained by all tandem duplications 
when going from $S$ to $T$.

\begin{lemma}\label{lem:stay-stable}
Suppose that $dist_{TD}(S, T) = k$ and let $X$ be an $(S,T)$-stable substring.
Let $S = S_0, S_1, \ldots, S_k = T$ be any minimum sequence of strings transforming $S$ to $T$ by tandem duplications. 
Then $X$ is $(S, S_i)$-stable for every $i \in [k]$.
\end{lemma}

\begin{proof}
Assume the lemma is false, and let $S_i$ be the first of $S_1, \ldots, S_k$ that does not verify the statement.
Then there are two characters $x, y$ belonging to $X$ such that $xy$ is $(S, T)$-stable, 
but $xy$ is not $(S, S_i)$-stable.

We claim that, under our assumption, $xy$ is not $(S, S_j)$-stable for any $j \in \{i, \ldots, k\}$.
As this includes $S_k = T$, this will contradict that $xy$ is $(S, T)$-stable.
We do this by induction --- as a base case, $xy$ is not $(S, S_i)$-stable so this is true for $j = i$.  
Assume that $xy$ is not $(S, S_{j-1})$-stable, where $i < j \leq k$.
Let $D$ be the duplication transforming $S_{j-1}$ to $S_j$ (here $D = (a,b)$ contains the start and 
end positions of the substring of $S_{j-1}$ to duplicate).

Suppose first that $xy$ is not $(S, S_{j-1})$-stable because $S_{j-1}$ has an occurrence of $x$ 
that is not followed by $y$.  
Thus $S_{j - 1}$ has an occurrence of $x$, say at position $p_x$, followed by $z \neq y$.  If we assume that $xy$ is $(S, S_j)$-stable, then 
a $y$ character must have appeared after this $x$ from $S_{j-1}$ to $S_j$. 
Changing the character next to this $x$ is only possible if the last character duplicated by $D$ is 
the $x$ at position $p_x$ and the first character of $D$ is a $y$.  
In other words, denoting $S_{j-1} = A_1yA_2xzA_3$ for appropriate $A_1, A_2, A_3$ substrings, the $D$ duplication must do the following
\begin{align*}
A_1\underline{yA_2x}zA_3 \dupsto A_1\underline{yA_2xyA_2x}zA_3
\end{align*}

But then, there is still an occurrence of $x$ followed by $z$, and it follows that $xy$ cannot be $(S, S_j)$-stable.

So suppose instead that $xy$ is not $(S, S_{j-1})$-stable because $S_{j-1}$ has an occurrence of $y$ preceded by $z \neq x$.
The character preceding this $y$ has changed in $S_j$.  But one can verify that this is impossible.  For completeness, we present each possible case: either $D$ includes both $z$ and $y$, includes one of them 
or none.  These cases are represented below, and each one of them leads to an occurrence of $y$ still preceded by $z$
(the left-hand side represents $S_{j-1}$ and the right-hand side represents $S_j$):
\begin{align*}
\mbox{Include both: } &A_1\underline{A_2zyA_3}A_4 \Rightarrow A_1\underline{A_2zyA_3A_2zyA_3}A_4 \\
\mbox{Include $z$ only: } &A_1\underline{A_2z}yA_3 \Rightarrow A_1\underline{A_2zA_2z}yA_3 \\
\mbox{Include $y$ only: } &A_1z\underline{yA_2}A_3 \Rightarrow A_1z\underline{yA_2yA_2}A_3 \\
\mbox{Include none: } &A_1\underline{A_2}zyA_3 \Rightarrow A_1\underline{A_2A_2}zyA_3 
\mbox{ or } A_1zy\underline{A_2}A_3 \Rightarrow A_1zy\underline{A_2A_2}A_3
\end{align*}

We have therefore shown that $xy$ cannot be $(S, S_j)$-stable, and therefore not $(S, T)$-stable, which conludes the proof.  
\end{proof}

Let $S'$ be a substring obtained from $S$ by tandem duplications, and let $X := S'[a .. b]$ be the substring of $S'$ at positions from $a$ to $b$.  Suppose that we apply a duplication $D = (c, d)$, which copies the substring $S'[c .. d]$. 
Then we say that $D$ \emph{cuts} $X$ if 
one of the following holds:

\begin{itemize}
    \item 
    $a < c \leq b$ and $b < d$, in which case we say that $D$ cuts $X$ \emph{to the right};
    
    \item
    $c < a$ and $a \leq d < b$, in which case we say that $D$ cuts $X$ \emph{to the left};
    
    \item
    $(a, b) \neq (c, d)$ and $a \leq c < d \leq b$, in which case $D$ cuts $X$ \emph{inside}.
\end{itemize}

In other words, if we write $X = X_1X_2$ and $S' = UV X_1X_2 WY$, cutting to the right takes the form $UV X_1 \underline{X_2W} Y \Rightarrow UB X_1 \underline{X_2WX_2W} Y$.  Cutting to the left takes the form $U \underline{VX_1}X_2 WY \Rightarrow U \underline{VX_1VX_1} X_2 WY$.  Rewriting $S' = U X_1X_2X_3 V$, cutting inside takes the form $U X_1 \underline{X_2} X_3 V \Rightarrow U X_1 \underline{X_2X_2} X_3 V$.
Note that if $D$ does not cut any occurrence of a maximal $(S, S')$-stable substring $X$ and $S''$ is obtained by applying $D$ on $S'$, then $X$ is $(S, S'')$-stable. 

The next lemma shows that we can assume that maximal stable substrings never get cut, and thus always get duplicated together.  The proof is in the Appendix: the idea is that any duplication that cuts an $X_j$ can be replaced by an equivalent duplication that doesn't.

\begin{lemma}\label{lem:never-cut-maximals}
Suppose that $dist_{TD}(S, T) = k$, and let $X_1, \ldots, X_l$ be the set of maximal $(S, T)$-stable substrings.  Then there exists a sequence of 
tandem duplications $D_1, \ldots, D_k$ transforming $S$ into $T$ such that no occurrence of an $X_j$ gets cut by a $D_i$. 

In other words, for all $i \in [k]$ and all $j \in [l]$, $D_i$ does not cut any occurrence of $X_j$ in the string obtained by applying $D_1, \ldots, D_{i-1}$ to $S$.
\end{lemma}

\begin{proof}
Let $D^*_1, \ldots, D^*_k$ be a sequence of tandem duplications transforming $S$ into $T$, 
and for $i \in [k]$, let $S_i$ be the string obtained by applying the first $i$ duplications.
Put $S_0 := S$.
We show that any $S_i$, $i \in [k]$, can be obtained from $S_{i-1}$ by a duplication $D_i$ that does not cut 
any $X_j$ occurrence in $S_{i-1}$, $j \in [l]$.  This proves the lemma, since $D_1, \ldots, D_k$ will form the desired sequence of duplications.

Fix $i \in [k]$, and assume that $D^*_i$ cuts some of the $X_j$'s.  
We note that since $S$ is exemplar, the $X_j$'s have pairwise distinct characters.  Hence $D^*_i$ can cut at most 
two occurrences of a maximal $(S, S_i)$-stable substrings, at most one to the left and at most one to the right (if an $X_j$ is cut inside, only one string can get cut).
Also, by Lemma~\ref{lem:stay-stable}, we know that every $X_j$ substring is $(S, S_i)$-stable.
We have four cases to consider: 

\begin{itemize}

\item 
$D^*_i$ cuts some $X_j$ inside.  Write $X_j = X_j^1X_j^2X_j^3$, where at least one of $X_j^1$ or $X_j^2$ is non-empty, and $S_{i-1} = AX_j^1X_j^2X_j^3B$.  This results in 

\begin{align*}
    S_{i-1} = AX_j^1\underline{X_j^2}X_j^3B \Rightarrow AX_j^1\underline{X_j^2X_j^2}X_j^3B = S_{i}
\end{align*}

Since characters from $X_j^1, X_j^2$ and $X_j^3$ are pairwise disjoint, $X_j$ cannot be $(S, S_{i})$-stable, a contradiction of Lemma~\ref{lem:stay-stable}.

\item
$D^*_i$ cuts some $X_j$ to the right, but no other string to the left.
Then we may write $X_j$ and $S_{i-1}$, respectively, as
 $X_j = X_j^1 X_j^2$ and $S_{i-1} = AX_j^1 X_j^2 BC$ such that $D$ copies the substring $X_j^2B$.  This gives

\begin{align*}
S_{i-1} = AX_j^1 \underline{X_j^2 B} C \Rightarrow AX_j^1 \underline{X_j^2 B X_j^2 B} C = S_{i}
\end{align*}

But by Lemma~\ref{lem:stay-stable}, $X_j$ is $(S, S_{i})$-stable.  Since $S_{i}$ has $BX_j^2$  as a substring, this must mean that $B = \hat{B} X_j^1$ for some substring 
$\hat{B}$ (note that we use the fact that $X_j^2$ has distinct characters, and thus that the occurrence of $X_j^1$ must be entirely in $B$).  Therefore $S_{i-1} = AX_j^1 X_j^2 \hat{B} X_j^1 C$.  Since $X_j$ is also $(S, S_{i-1})$-stable, this in turn implies that $C = X_j^2 \hat{C}$ for some substring $\hat{C}$, and in fact we get 

\begin{align*}
S_{i-1} = AX_j^1 \underline{X_j^2 \hat{B} X_j^1} X_j^2 \hat{C} \Rightarrow AX_j^1 \underline{X_j^2 \hat{B} X_j^1} \underline{X_j^2 \hat{B} X_j^1} X_j^2 \hat{C} = S_{i}
\end{align*}

We can replace $D^*_i$ by a duplication that copies $X_j^1X_j^2\hat{B}$, i.e.

\begin{align*}
S_{i-1} = A \underline{X_j^1 X_j^2 \hat{B} }X_j^1 X_j^2 \hat{C} \Rightarrow A \underline{X_j^1 X_j^2 \hat{B} X_j^1 X_j^2 \hat{B} }X_j^1 X_j^2 \hat{C} = S_{i}
\end{align*}

Since this duplication starts with $X_j$ and copies itself right before another occurrence of $X_j$, it is clear that it does not cut any maximal $(S, T)$-stable substring, as desired.

\item
$D^*_i$ cuts some $X_j$ to the left, but cuts no string to the right.
Then we may write 

\begin{align*}
S_{i-1} = A\underline{BX_j^1}X_j^2C \Rightarrow A\underline{B X_j^1 B X_j^1}X_j^2C = S_{i}
\end{align*}

Similarly as in the previous case, since $X_j$ is $(S, S_{i})$-stable, we must have $B = X_j^2\hat{B}$.  We are led to deduce that 
$A = \hat{A}X_j^1$.
Therefore we have

\begin{align*}
S_{i-1} = \hat{A}X_j^1\underline{X_j^2\hat{B} X_j^1}X_j^2 C \Rightarrow 
\hat{A}X_j^1\underline{X_j^2\hat{B} X_j^1 X_j^2\hat{B} X_j^1}X_j^2 C = S_{i}
\end{align*}

As before, we could instead duplicate the substring $X_j^1X_j^2 \hat{B}$ occuring right after $\hat{A}$.

\item
$D^*_i$ cuts some $X_j$ to the left and some $X_h$ to the right.  Note that $X_j = X_h$ is possible, which we will in fact show to hold.
We may write $X_j = X_j^1X_j^2$ and $X_h = X_h^1 X_h^2$ such that we get

\begin{align*}
S_{i-1} = AX_j^1\underline{X_j^2 B X_h^1}X_h^2 C \Rightarrow AX_j^1\underline{X_j^2 B X_h^1 X_j^2 B X_h^1} X_h^2 C = S_{i}
\end{align*}

Now, $X_j$ is $(S, S_{i})$-stable and $S_{i}$ contains $X_h^1 X_j^2$ as a substring.
It follows that the last character of $X_h^1$ must be the last character of $X_j^1$ (applying the $(S, X_j)$-stability argument on the $X_h^1X_j^2$ substring).  
In other words, $X_h$ and $X_j$ have a character in common.
Since $S$ is exemplar, the set of maximal $(S, T)$-stable strings $X_1, \ldots, X_l$
have pairwise disjoint sets of characters and partition $S$ into substrings.  We deduce that $X_j = X_h$, as we predicted. 

We now want to show that $X_h^1 = X_j^1$.  Note that both $X_j^1$ and $X_h^1$ are prefixes of $X_j$
(for $X_h^1$, this is because $X_h = X_j$).
Moreover, as argued the last character of $X_h^1$ is also the last character of $X_j^1$.
These two observations establish that $X_h^1 = X_j^1$ (and therefore $X_h^2 = X_j^2$).
This allows us to rewrite $S_i$ and $S_{i+1}$ as 

\begin{align*}
S_{i-1} = AX_j^1 \underline{X_j^2 B X_j^1}X_j^2 C \Rightarrow AX_j^1\underline{X_j^2 B X_j^1 X_j^2 B X_j^1} X_j^2 C = S_{i}
\end{align*}

It becomes clear that we can duplicate the $X_j^1X_j^2B$ substring after $A$ in $S_{i-1}$ to obtain 
$S_{i}$.  This duplication does not cut any maximal $(S, T)$-stable substring. 
\end{itemize}

We have thus shown that if $D_i^*$ cuts some occurrence of one or more of the $X_j$'s, then $D_i^*$ can be replaced by a duplication $D_i$ that yields the same string $S_i$ as $D_i^*$.  
The only case remaining is when $D_i^*$ does not cut any $X_j$.  In that case, we set $D_i = D_i^*$.  This shows that we can find the claimed sequence $D_1, \ldots, D_k$ in which no $X_j$ ever gets cut.
\end{proof}

The above implies that we may replace each maximal $(S, T)$-stable substring $X$
of $S$ and $T$ by a single character, since we may assume that characters of $X$ 
are always duplicated together.
It only remains to show that the resulting strings are small enough.
The proof of the following lemma has a very simple intuition.  First, $S$ has exactly $1$ maximal $(S, S)$-stable substring.  Each time we apply a duplcation, we ``break'' at most 2 stable substrings, which creates 2 new ones.  So if we apply $k$ duplications, there are at most $2k + 1$ such substrings in the end.
See the Appendix for the full proof.

\begin{lemma}\label{lem:bounded-maximals}
If $dist_{TD}(S, T) \leq k$, then there are at most $2k + 1$ maximal $(S, T)$-stable substrings.
\end{lemma}

\begin{proof}
Let $S = S_0, S_1, \ldots, S_k = T$ be any minimum sequence of strings transforming $S$ to $T$ by tandem duplications.  
We show by induction that, for each $i \in \{0, 1, \ldots, k\}$, the number of maximal $(S, S_i)$-stable substrings is at most $2i + 1$.  
For $i = 0$, there is only one maximal $(S, S)$-stable substring, namely $S$ itself. 
Now assume that there are at most $2(i - 1) + 1 = 2i - 1$ maximal $(S, S_{i-1})$-stable substrings.
Let $\X = \{X_1, \ldots, X_l\}$ be the set of these substrings, $l \leq 2i - 1$.  We then know that $S_{i-1}$ can be written as a concatenation of $X_j$'s from $\X$ (with possible repetitions).  The duplication $D$ transforming $S_{i-1}$ to $S_i$ copies some of these $X_j$'s entirely, except at most two $X_j$'s at the ends which it may copy partially (i.e. $D$ cuts at most two substrings from $\X$).  In other words, the substring duplicated by $D$ can be written as $X_j^2 X_{a_1} X_{a_2} \ldots X_{a_r} X_h^1$, where $X_j = X_j^1X_j^2$ and $X_h = X_h^1X_h^2$ for some $j, h \in [l]$ (and $X_{a_1}, \ldots, X_{a_r} \in \X)$.
Going further, $S_{i-1}$ and $S_i$ can be written, using appropriate substrings $A, B, C$ that are concatenation of elements of $\X$, as

\begin{align*}
    S_{i-1} = A X_j^1 \underline{ X_j^2 B X_h^1 }  X_h^2 C \Rightarrow
    A X_j^1 \underline{X_j^2 B X_h^1 X_j^2 B X_h^1 } X_h^2 C = S_i
\end{align*}

Now, any $X_r \in \X \setminus \{X_j, X_h\}$ is $(S, S_i)$-stable.  
Moreover, $X_j^1, X_j^2, X_h^1$ and $X_h^2$ are also $(S, S_i)$-stable.  
This shows that the number of maximal $(S, S_i)$-stable substrings 
is at most $2i - 1 - 2 + 4 = 2i + 1$, as desired.
\end{proof}

We can now transform an instance $(S, T)$ of \textsf{Exemplar-$k$-TD} to a kernel, an equivalent instance $(S', T')$ of size depending only on $k$.

\begin{theorem}
An instance $(S, T)$ of \textsf{Exemplar-$k$-TD}  admits a kernel $(S', T')$ in which $|S'| \leq 2k + 1$ and $|T'| \leq (2k + 1)2^k$. 
\end{theorem}

\begin{proof}
Let $S', T'$ be obtained from an instance $(S, T)$ by replacing each maximal 
$(S, T)$-stable substring by a distinct character.  
We first prove that $(S', T')$ is indeed a kernel by establishing its equivalence with $(S, T)$.
Clearly if $(S', T')$ can be solved using at most $k$ duplications, then the same applies to $(S, T)$.
By Lemma~\ref{lem:never-cut-maximals}, the converse also holds: if $(S, T)$ can be solved with at most $k$ duplications, we may assume that these duplications never cut a maximal $(S, T)$-stable substring, and so these duplications can be applied on $(S', T')$.  

Then by Lemma~\ref{lem:bounded-maximals}, we know that $S'$ has at most $2k + 1$
characters.  If $dist_{TD}(S', T') \leq k$, then each duplication can at most double the size of the previous string.  Therefore, $T'$ must have size at most $(2k + 1)2^k$.
\end{proof}

The kernelization can be performed in polynomial time, as one only needs to identify maximal $(S, T)$-stable substrings and contract them (we do not bother with the exact complexity for now).  
Running the brute-force algorithm from Proposition~\ref{prop:k-search-tree}
yields the following.

\begin{corollary}
The exemplar $k$-tandem duplication problem can be solved in time 
$O(((2k + 1)2^k)^{2k} + poly(n)) = 2^{O(k^2)} + poly(n)$, where $n$ is the size of the input.
\end{corollary}


\section{Open problems}

Although this work answers some open questions, many of them still deserve investigation.  We conclude with some of these question along with future research  perspectives.

\begin{itemize}
    \item 
    Is the \textsf{$k$-TD} problem FPT in parameter $k$?  As we observe in our \textsf{Exemplar-$k$-TD} kernelization, if $T$ and $S$ are large compared to $k$, they must share many long common substrings which could be exploited for an FPT algorithm.  It is also an interesting question whether \textsf{Exemplar-$k$-TD} admits a polynomial size kernel.

    \item
    If $|\Sigma|$ is fixed, is \textsf{$k$-TD} in $P$?  Even the $|\Sigma| = 2$ case is open.  One possibility it to check whether we can reduce the alphabet of any instance to some constant by encoding each character appropriately. 
    
    \item
    Can one decide in polynomial time whether $S \dupstok{*} T$?  
    The only known result on this topic is that it can be done if $|\Sigma| = 2$, 
    as one can construct a finite automaton accepting all strings generated by $S$
    (though this automaton does not give the minimum number of duplications required).

    \item
    Does the \textsf{$k$-TD} problem admit a constant factor approximation algorithm?  The answer might depend on the hardness of deciding whether $S \dupstok{*} T$, but one might still consider the promise version of the problem.

    \item
    If the length of each duplicated string is bounded by $d$, is \textsf{$k$-TD} in $P$ (with $d$ treated as a constant)?  We believe that it is FPT in $k + d$, but is it FPT in $d$?
    
\end{itemize}



\newpage
\section*{Appendix}

\subsection*{Proof of Theorem~\ref{thm:main-np-hardness}, ($\Leftarrow$) direction}

\noindent
Suppose that $T$ can be turned into $S$ using $\alpha$ contractions, where $\alpha \leq p/m \cdot d (r + nm) + 4cdn$.  Let $C_1, \ldots, C_{\alpha}$ be a corresponding sequence of contractions.  Here, each $C_i$ contraction is given by a pair of positions ranging over both copies of the contracted substring.
The idea is to show that, for some integer $t$, many of the $E_i$ substrings are removed after $t$ of the $X_i^d$ substrings from $\X^d$ have been contracted to $X_i$.  This set of $t$ $X_i$'s corresponds to the vertices of a cost-effective subgraph.  The main components of the proof are to show that each $E_i$ must be removed, no two $E_i$'s are affected by the same contraction, and most (though perhaps not all) $E_i$ require either $dn + dt$ or  $dc + dn - 1$ contractions.

Denote $T(l)$ as the string obtained from $T$ after applying the first $l$ contractions $C_1, \ldots, C_l$ in the sequence, with $T(0) = T$ and $T(\alpha) = S$.
A \emph{block} of $T(l)$  is a substring $P$ of $T(l)$ whose last character is $\L$, that has only one occurrence of $\L$ and that is a maximal string with this property (hence in $T(l)$, the first character of $P$ is either preceded by $\L$ or is the start of $T(l)$).  For instance, each $E_i$ substring is made of 2 blocks.

We need a (conceptual) mapping from the characters of $T(l)$ to those of $T$.  We assume that each character of $T$ is distinguishable, i.e. each character has a unique identifier associated to it (we do not define it explicitly, but for instance each character can be labeled by its position in $T$) .  When contracting a substring $DD$ from $T(l)$ to $T(l + 1)$, we assume that the characters of the second half are deleted.  
That is, if $T(l) = LDDR$ and $T(l + 1) = LDR$, only the characters from the first, leftmost $D$ substring remain.  Therefore when going from $T(l)$ to $T(l + 1)$, some characters might change position but they keep the same identifier.  Thus each character of $T(l)$ corresponds to a distinct character in $T$, namely the one with the same identifier.  When we say that a character $x$ from $T(l)$ \emph{belongs} to a subtring $P$ of $T$, we mean that $x$ corresponds to a character of $P$ in $T$ under this mapping.

For a substring $P$ of $T$, we say that $P$ is \emph{removed} in $T(l)$ if $T(l)$ has no characters that belong to $P$. 
We say that $P$ is removed if there is some $T(l)$ in which $P$ is removed.

\begin{nclaim}
Each $E_i$ substring must be removed in $T(\alpha)$.
\end{nclaim}

\begin{proof}
Consider the first, leftmost block $\Z_{2p}^0 \X^d \L$ of $T$.  
Observe that for any $T(l)$ and any symbol $s \in \Sigma$, $T(l)$ has an occurrence of $s$ that belongs to this block (as there is no way to completely remove all occurrences of a symbol from the first block $\Z_{2p}^0 \X^d \L$, by our way of deleting the rightmost copy in contractions). 
Since $\Sigma(E_i) \subseteq \Sigma(\Z_{2p}^0 \X^d \L)$, this means that if $E_i$ is not removed, the last string $T(\alpha)$ in the sequence has at least two occurrences of some character in $\Sigma$.  Because $S$ is exemplar, this contradicts that $T(\alpha) = S$.
\end{proof}

Notice that in $T$, $E_i = \Z_i^{01} \X_{e_i} \L \Z_{2p}^1 \X \L$ has two blocks.  We write $E_i' = \Z_i^{01} \X_{e_i} \L$ to denote the first block of $E_i$.
We let $E'_i(l)$ be the substring of $T(l)$ formed by all the characters that belong to $E'_i$, noting that $E'_i(l)$ is possibly the empty string or a subsequence of $E'_i$.  
For $a \in \{0,1,01\}$, a block $BX\L$ is called a $\Z^{a}_i\X\L$-block if $BX\L$ is a subsequence of $\Z_i^{a}\X^d\L$ and $\Sigma(BX\L) = \Sigma(\Z_i^{a}\X^d\L)$.  In other words, $BX\L$ has all the symbols that occur in $\Z_i^{a}\X^d\L$ in the same order, although the number of occurrences of a symbol might differ.
A $\Z_{2p}^1\X\L$-cluster is a string obtained by concatenating an arbitrary number of $\Z_{2p}^1\X\L$-blocks.  Using notation borrowed from regular languages, we write $(\Z_{2p}^1\X\L)^*$ to denote a possibly empty $\Z_{2p}^1\X\L$-cluster.

\begin{nclaim}\label{claim:the-form}
For any $l$, $T(l)$ has the form 

$$BX\L(\Z_{2p}^1\X\L)^* E'_{i_1}(l) (\Z_{2p}^1\X\L)^* E'_{i_2}(l) (\Z_{2p}^1\X\L)^* \ldots E'_{i_h}(l) (\Z_{2p}^1\X\L)^*$$
where

\begin{itemize}
    \item 
    $B X \L$ is a $\Z_{2p}^0\X\L$-block

    \item
    $i_1 < i_2 < \ldots < i_h$

    \item
    each $(\Z_{2p}^1\X\L)^*$ is a $\Z_{2p}^1\X\L$-cluster
    
    \item 
    for each $j \in \{i_1, \ldots, i_h\}$, $E'_j(l)$ is a $\Z_j^{01}\X\L$-block
    
\end{itemize}
\end{nclaim}

\begin{proof}
Notice that the statement is true for $l = 0$, since $T$ has the required form.  Assume the claim is false and let $l$ be the smallest integer for which $T(l)$ is a counter-example to the claim.  Thus we may assume that $T(l - 1)$ has the same form as in the claim statement.
Let $D$ be the string that was contracted from $T(l - 1)$ to $T(l)$ (so that $T(l - 1)$ contained $DD$ as a subtring, and the second $D$ substring gets removed from $T(l-1)$).  If $D$ does not contain a $\L$ character, then $DD$ is entirely contained in a single block.  Contracting $DD$ cannot remove all occurrences of a symbol nor change their order, and thus the above form must be preserved (every $\Z_i^a\X\L$-block will remain a $\Z_i^a\X\L$-block).
Assume instead that the last character of $D$ is $\L$.  Then $DD = D'\L D'\L$ for some string $D'$, and removing the second $D'\L$ half only removes entire blocks of $T(l - 1)$.  As this block cannot be $BX\L$ and since each $E'_{i_j}(l - 1)$ is itself a block, this preserves the form of the claim.

Therefore, we may assume that the last character of $D$ is not $\L$, but that $D$ has at least one $\L$ character.  Observe that no character from the $BX\L$-block can get removed by such a contraction, since the left half of $DD$ is kept.  It follows that the first condition of the claim is preserved after contracting $DD$.  It is easy to see that the second condition is also preserved.  
For the other two conditions, we have four cases to consider depending on where the right half of $DD$, i.e. the removed substring, is located in $T(l-1)$.

\begin{enumerate}
    \item 
    The leftmost character removed belongs to a $E'_{j}(l-1)$ substring.  In this case, because $D$ contains a $\L$, the right half of $DD$ must contain the $\L$ of $E'_{j}(l - 1)$.  Let $b$ be the first character of $E'_j(l-1)$, which is the first character of $\Z_j^{01}$ since $E'_j(l-1)$ is a $\Z_j^{01}\X\L$-block, by assumption.  We treat $b$ as a uniquely identifiable character in $T(l-1)$.  Note that this $b$ is preceded by $\L$ in $T(l - 1)$.
    There are two subcases: either this $b$ is the leftmost removed character or not.  In the first case, $D = bD'$ for some $D'$, which we illustrate as follows (we add brackets around the two copies of $D$, and underline the removed half):
    
    $$T(l-1) = T' [bD'] [\underline{bD'}] T''$$
    
    for some $T'$ and $T''$.  Here the second $b$ is the one from $E'_j(l-1)$.  Since it is preceded by $\L$ in $T(l-1)$, this implies that $D'$ (and thus $D$) ends with a $\L$.  But we are assuming that $D$ does not end with $\L$.
    Therefore we know that $b$ is not the leftmost character removed from $T(l-1)$.  In this case, the $b$ belongs to the left half of $DD$ (if not, the left $D$ would entirely be in $E'_j(l-1)$ and could not contain a $\L$).  This case can be illustrated as follows:
    
    $$T(l-1) = T' [D'\L b D'']  [\underline{D' \L b D'' }] T''$$
    
    where $D = D' \L b D''$.  
    Here, the first $b$ is the one from $E'_j(l-1)$.  Why does the left $D$ have to contain the $\L$ preceding $b$?  Because we know $D$ contains $\L$: if the left $D$ starts with $b$ and contains $\L$, it contains all of $E'_j(l-1)$, contradicting that characters of $E'_j(l-1)$ get deleted.  
    It follows that $D$ must contain $\L b$ as a substring.  But  there is only one occurrence of $\L b$ in $T(l-1)$, as  $E'_{j}(l-1)$ is the only block that starts with $b$.  
    Therefore, $T(l-1)$ cannot contain $DD$ as a subtring, a contradiction.  
    
    \item
    The rightmost character removed is in some 
$E'_{j}(l-1)$ substring.  Again, if we put $b$ as the first character of $E'_j(l-1)$, this means that the removed $D$ contains $\L b$ as a subtring (if not, $D$ cannot contain a $\L$), which has only one occurrence.  We get the same contradiction.

    \item
    The leftmost and rightmost characters that get removed belong to distinct $(\Z_{2p}^1 \X \L)$-clusters, implying the existence of at least one $E'_j(l-1)$ in between.
    The same type of $\L b$ substring argument applies, since the removed $D$ contains the first character of $E'_j(l-1)$ and its preceding $\L$.
    
    \item
    The leftmost and rightmost characters that get removed belong to the same 
    $(\Z_{2p}^1 \X \L)$-cluster.  In this case, it is not hard to verify that the result is yet another $(\Z_{2p}^1 \X \L)$-cluster, which preserves the desired form.
\end{enumerate}

The cases above cover every possibility: we have covered the cases where the removed substring begins or ends in a $E'_j(l-1)$, and the cases where both its extremities end in a cluster.  This proves the claim.
\end{proof}

We will say that a contraction $C_l$ \emph{affects} $E'_i(l)$
if at least one character of $E'_i(l)$ is in the substring corresponding to $C_l$.  Recall that $C_l$ spans over both copies of the contracted substring, and so $E'_i(l)$ could be affected by $C_l$ even if none of its characters gets removed.

\begin{nclaim}\label{claim:no-two-eis}
For any $l$, the contraction $C_l$ from $T(l)$ to $T(l+1)$ does not affect two distinct $E'_i(l)$ and $E'_j(l)$ substrings of $T(l)$.
\end{nclaim}

\begin{proof}
Suppose the claim is false, and let $T(l)[a_1 .. a_2]$ be the substring of $T(l)$ affected by the contraction, where $T(l)[a_1 .. a_2] = DD$ for some string $D$.  Assume that $T(l)[a_1 .. a_2]$ contains characters from both $E'_i(l)$ and $E'_j(l)$, where $i < j$.  Let $b_i, b_j$ be the first characters of $E'_i(l)$ and $E'_j(l)$, respectively, which are the first character of $\Z_i^{01}$ and $\Z_j^{01}$ by Claim~\ref{claim:the-form}.
Then $T(l)[a_1 .. a_2]$ must contain the substring $\L b_j$, since $E'_j(l)$ occurs later than $E'_i(l)$ in $T(l)$.  Since $\L b_j$ occurs only once in $T(l)$ as argued in the previous claim, $\L b_j$ cannot be a substring of $D$.  This is only possible if  $D$ starts with $b_j$ (and consequently ends with $\L$).
Now, since $E'_i(l)$ does not contain $b_j$, $T(l)[a_1 .. a_2]$ cannot start with a suffix of $E'_i(l)$.  Yet some characters of $E'_i(l)$ are in $T[a_1 .. a_2]$, implying that the substring $\L b_i$ is in $T(l)$.  Again, this substring occurs only once in $T(l)$, and thus $D$ must start with $b_i$ and end with $\L$.  But this is impossible since $b_i \neq b_j$.
\end{proof}

Notice that $T$ has one occurrence of the $\X^d = X_1^d \ldots X_n^d$ substring.  We will therefore refer to the $\X^d$ substring of $T$ without ambiguity.
For $i \in [n]$, we let 
$X_i(l)$ denote the substring of $T(l)$ formed by all the characters that belong to the $X_i^d$ substring of $\X^d$.
We will say that $X_i$
is \emph{activated} in $T(l)$ if $X_i(l) = X_i$.
Intuitively speaking, $X_i$ is activated in $T(l)$ if it has undergone $d$ contractions to turn it from $X_i^d$ into $X_i$.

\begin{nclaim}\label{claim:two-removal-ways}
Let $i \in [p]$, and suppose that $E_i'$ is not removed in $T(l- 1)$ but is removed in $T(l)$.  Let $t$ be the number of $X_i$'s that were activated in $T(l - 1)$.  Suppose that $v_{i_1}$ and $v_{i_2}$ are the two endpoints of edge $e_i$.

Then the number of contractions that have affected $E'_i$ is at least $dc + dn - 1$ if $X_{i_1}$ or $X_{i_2}$ is not activated in $T(l-1)$, or at least $\min\{ dt + dn , dc + dn - 1 \}$ if $X_{i_1}$ and $X_{i_2}$ are both activated in $T(l-1)$.
\end{nclaim}

\begin{proof}
By Claim~\ref{claim:the-form}, in $T(l-1)$, $E'_i(l - 1)$ belongs to a $\Z_i^{01}\X\L$-block.
As $E'_i(l - 1)$ gets removed completely after the $l$-th contraction of some substring $DD$, it follows that $D$ must contain a substring that is equal to $E'_i(l - 1)$.  
The second $D$ of the $DD$ square certainly contains the $E'_i(l - 1)$ substring that gets removed, but consider the copy of $E'_i(l-1)$ in the first $D$ of the $DD$ square.  That is, we can represent the contraction as

$$
T' [D_1 \hat{E}'_i(l-1) D_2] [\underline{D_1 E'_i(l-1) D_2}] T''
$$

where $D = D_1 E'_i(l-1) D_2$ and $\hat{E}'_i(l-1)$ is a substring equal to $E'_i(l-1)$.
Since $E'_i(l-1)$ is a block, this $\hat{E}'_i(l-1)$ copy is a substring of a (possibly larger) block.
By Claim~\ref{claim:no-two-eis}, there are only two such possible blocks: either it is $BX\L$, which is the $\Z_{2p}^0\X\L$-block at the start of $T(l-1)$, or it is a $\Z_{2p}^1\X\L$-block from a cluster preceding $E'_i(l - 1)$.  We analyze these two cases, which will prove the two cases of the claim.

Suppose that $\hat{E}'_i(l-1)$ is located in the first block $BX\L$ of $T(l-1)$.  Note that since $\X_{e_i}$ contains $X_{i_1}$ and $X_{i_2}$ in their contracted form (as opposed to $X^d_{i_1}$ or $X^d_{i_2}$), $X_{i_1}$ and $X_{i_2}$ must be activated in $T(l-1)$ for the $DD$ contraction to be possible.  
Moreover for $E'_i(l-1)$ to be equal to a substring of $BX\L$, every other $X_j$ with $j \neq i_1, i_2$ that is activated must be contracted in $E'_i(l-1)$ (i.e. $E'_i$ contains $X_j^d$, but must contain $X_j$ in $E'_i(l-1)$).
This requires at least $d(t-2)$ contractions.
Moreover, $B$ contains the $B_1$ substring, whereas $E'_i$ contains $B_1^*$.  There must have been at least $dn + 2d - 1$ affecting the $\Z_i^{01}$ substring of $E'_i$.  
Counting the contraction removing $E'_i(l-1)$, this implies the existence of $d(t - 2) + dn + 2d - 1 + 1 = dn + dt$ contractions affecting $E'_i$.  

If instead $\hat{E}'_i(l - 1)$ was located in a $\Z_{2p}^1\X\L$-block, call this block $P$, then it suffices to note that $P$ contains $B_0$ as a substring whereas $E'_i$ contains $B_0^*$.  Counting the contraction that removes $E'_i(l-1)$, it follows that at least $dc + 2d - 1$ contractions must have affected $E'_i$.
\end{proof}

The above shows that there are two types of contractions that can remove $E'_i$ from $T(l)$.  Either it uses the $BX\L$ substring at the start of $T(l-1)$, or it uses a block from a $\Z_{2p}^1\X\L$-cluster.
We will call the $E'_i$'s that get removed in the first manner  Type 1, and those that get removed in the second manner Type 2.  

We would like to show that every Type 1 $E'_j$ gets removed with the same set of activated $X_i$'s, but it might not be the case.  
Rather, our next goal is to show that ``many'' $E'_j$'s of Type 1 use the same activated $X_i$'s.
For $k \in [p]$, denote by $act(E'_k)$ the set of activated $X_i$'s when $E'_k$ gets removed (i.e. when $E'_k$ is not removed from $T(l-1)$ but is removed from $T(l)$).
Let us partition $[p]$ into intervals of integers $P_a = [1 + am .. m + am]$, where $a \in \{0, \ldots, p/m - 1\}$.  We say that interval $P_a$ is \emph{homogeneous} 
if, for each $i, j \in P_a$ such that $E'_i$ and $E'_j$ are of Type 1, 
$act(E'_i) = act(E'_j)$.  In other words, $P_a$ is homogeneous if all the Type 1 $E'_i$ substrings 
corresponding to those in $P_a$ are removed with the same set of activated $X_i$'s.

\begin{nclaim}
There are at least $p/m - 2n$ homogeneous intervals.
\end{nclaim}

\begin{proof}
Observe that once an $X_i$ is activated, it remains so for the rest of the contraction sequence.  Since there are $n$ of the $X_i$'s, there are only $n + 1$ possible values for $act(E'_k)$ (counting the case when none of them are activated).  There are $p/m$ intervals, and it follows that at most
$n + 1 \leq 2n$ of them are not homogeneous.
\end{proof}



We can now go on with the final elements of the proof.  
Define $cost(E'_i)$ as the number of contractions that affect $E'_i$.
Let $P_{a_1}, \ldots, P_{a_h}$ be the set of homogeneous intervals, $h \geq p/m - 2n$.
Choose the $P_a$ interval among those whose sum of corresponding $E'_i$ costs is minimized --- in other words choose $P_a$ such that 

$$\sum_{i \in P_a} cost(E'_i) = \min_{j \in [h]} \sum_{i \in P_{a_j}} cost(E'_i)$$

By Claim~\ref{claim:no-two-eis}, no two $E'_i$'s share their cost, and by the minimality of $P_a$ the total number of contractions is at least 

$$\left(\frac{p}{m} - 2n\right)\sum_{i \in P_a}cost(E'_i)$$

We will only bother with these contractions and we make no assumption on the non-homogeneous intervals.  Assume that there is at least one $i \in P_a$ such that $E'_i$ is of Type 1.  Then by Claim~\ref{claim:two-removal-ways}, $cost(E'_i)$ is either at least $\min \{dc + dn - 1, dt + dn \}$ where $t = |act(E'_i)|$, or $cost(E'_i)$ is at least $dc + dn - 1$.  If $dt + dn \geq dc + dn - 1$, we may assume that $E'_i$ is of Type 2 since removing $E'_i$ using Type 2 contractions will not increase its cost.  We will therefore assume that if there is at least one $E'_i$  of Type 1 in $P_a$, then $dt + dn < dc + dn - 1$ and thus $cost(E'_i) \geq dt + dn$.

Now, choose any $i$ in $P_a$ such that $E'_i$ is of Type 1, and let $W$ be the set of vertices of $G$ corresponding to those in $act(E'_i)$.
That is, $v_j \in W$ if and only if $X_j$ is activated when $E'_i$ gets removed.
If there does not exist an $E'_i$ of Type 1 to choose, then define $W = \emptyset$.
Denote $|W| = t$ and $|E(W)| = s$.
We claim that $W$ is a subgraph of $G$ satisfying $c(m - s) + ts \leq r$.

Assume $c(m - s) + ts > r$ (otherwise, we are done).  As we are dealing with integers, this means $c(m - s) + ts \geq r + 1$.  We will derive a contradiction on the assumed number of contractions.
For any $E'_i$ where $i \in P_a$, by Claim~\ref{claim:two-removal-ways}, either $e_i$ is not in $W$ and $cost(E'_i) \geq dc + dn - 1$, or
$e_i$ is in $W$ and $cost(E'_i) \geq dt + dn$. 
Note that we needed to choose $P_a$ to be homogeneous to guarantee that every Type 1 $E'_i$ uses the same value of $t$ in the cost $dt + dn$.
It follows that the total number of contractions is at least
\begin{align*}
&\left( \frac{p}{m} - 2n\right)[(m - s)(dc + dn - 1) + s(dt + dn)] \\
=& \left( \frac{p}{m} - 2n\right)[d((c + n)(m - s) + s(t+n)) - m + s] \\
=&\left( \frac{p}{m} - 2n\right)[d(c(m - s) + st) + d(nm - ns + ns) - m + s] \\
=&\left( \frac{p}{m} - 2n\right) \cdot d \cdot [c(m - s) + st + nm] + \left( \frac{p}{m} - 2n\right)(s - m) \\
\geq &\left( \frac{p}{m} - 2n\right) \cdot d \cdot [r + 1 + nm] + \left( \frac{p}{m} - 2n\right)(s - m) \\
=& \left( \frac{p}{m} - 2n\right) \cdot d \cdot [r + nm] + \left( \frac{p}{m} - 2n\right)(d + s - m) \\
=& \frac{p}{m} \cdot d(r + nm) - 2dn(r+nm) + \left( \frac{p}{m} - 2n\right)(d + s - m)
\end{align*}

Now if $d$ and $p$ are large enough, the above is strictly greater $p/m \cdot d (r + nm) + 4cdn$, leading to a contradiction.  
Our chosen values $d = m + 1$ and $p = (n + m)^{10}$ easily verify this.  We have therefore shown that $W$ has the desired cost.  This concludes the proof.
\end{proof}

\end{document}